\def\beq{\begin{equation*}}
\def\eeq{\end{equation*}}
\def\bql{\begin{equation}}
\def\eql{\end{equation}}
\def\bqn{\begin{eqnarray*}}
\def\eqn{\end{eqnarray*}}
\def\bnl{\begin{eqnarray}}
\def\enl{\end{eqnarray}}
\newcolumntype{T}{S[table-format=3.3, input-symbols={()},
                    table-space-text-post={$^{***}$},
                    table-align-text-post=false]}
\newcolumntype{C}{>{\centering\arraybackslash}X} 
\newtheorem{lemma}{Lemma}
\newtheorem{remark}{Remark}
\definecolor{green}{RGB}{11,155,13}
\newcommand{\longthmtitle}[1]{\mbox{} \emph{(#1):}}
\newcommand{\nodeset}{\mathcal{V}}
\newcommand{\edgeset}{\mathcal{E}}
\newcommand{\edge}[2]{e_{#1,#2}} 
\newcommand{\state}[2]{s_{#1 #2}} 
\newcommand{\edges}[2]{e_{#1,#2}}
\newcommand{\supnodes}[2]{\mathcal{Z}_{#1,#2}}
\newcommand{\cost}[2]{c_{#1,#2}} 
\newcommand{\costn}[2]{c^{#1}_{#2}} 
\newcommand{\supcost}{\tilde{c}} 
\newcommand{\supmove}[2]{\tilde{c}_{#1,#2}} 
\newcommand{\Cost}[2]{C^{#1}_{#2}} 
\newcommand{\post}[2]{p^{#1}_{#2}}
\newcommand{\actset}[2]{\mathcal{A}^{#1}_{#2}}
\newcommand{\act}[3]{a^{#1}_{#2 #3}}
\newcommand{\pathset}[1]{\mathcal{R}^{#1}}
\newcommand{\paths}[2]{r^{#1}_{#2}}
\newcommand{\CostJ}[2]{C_{#1,#2}} 
\newcommand{\pathsetJ}{\mathcal{U}} 
\newcommand{\pathJ}[1]{u_{#1}} 
\newcommand{\bgraph}{base graph}
\newcommand{\egraph}{environment graph}
\title{\LARGE \bf
Team Coordination on Graphs with State-Dependent Edge Cost
}
\author{Sara Oughourli$^*$, Manshi Limbu$^*$, Zechen Hu$^*$, Xuan Wang, Xuesu Xiao, and Daigo Shishika
\thanks{Sara Oughourli and Daigo Shishika are with the Mechanical Engineering Department at George Mason University. Emails: {\tt\small \{soughour,dshishik\}@gmu.edu}}
\thanks{Manshi Limbu and Xuesu Xiao are with the Computer Science Department at George Mason Univeristy. Emails: {\tt\small \{klimbu2,xiao\}@gmu.edu}}
\thanks{Zechen Hu and Xuan Wang are with the Electrical Engineering Department at George Mason Univeristy. Emails: {\tt\small \{zhu3,xwang64\}@gmu.edu}}
\thanks{*The authors contributed equally as co-first authors.}
}
\begin{document}

\maketitle
\thispagestyle{empty}
\pagestyle{empty}


\begin{abstract}
    This paper studies a team coordination problem in a graph environment. 
    Specifically, we incorporate ``support'' action which an agent can take to reduce the cost for its teammate to traverse some edges that have higher costs otherwise.
    Due to this added feature, the graph traversal is no longer a standard multi-agent path planning problem.
    To solve this new problem, we propose a novel formulation that poses it as a planning problem in the joint state space: the \emph{joint state graph} (JSG).
    Since the edges of JSG implicitly incorporate the support actions taken by the agents, we are able to now optimize the joint actions by solving a standard single-agent path planning problem in JSG.
    One main drawback of this approach is the curse of dimensionality in both the number of agents and the size of the graph.
    To improve scalability in graph size, we further propose a hierarchical decomposition method to perform path planning in two levels.
    We provide complexity analysis as well as a statistical analysis to demonstrate the efficiency of our algorithm.
    
\end{abstract}

\section{INTRODUCTION}

 In this work, we are interested in designing coordinated group motion, where the safety or cost for one agent to move from one location to another may depend on the support provided by its teammate. As an example, let's say there are two robots traversing an environment represented as a graph in Fig. \ref{fig: intro-example}.
 Starting from 1, the robots face a wall, represented by a red edge. 
 The robots could either climb a ladder together and potentially fall and break (move from 1 to 4 together), or one robot could hold the ladder (support from 2) while the other moves up from 1 to 4. The former option is high risk, while the latter is low risk and preferable. 
 Alternatively, if the ladder is bolted to the ground, then climbing together can be low risk and preferable.
 This paper develops a framework to study when such coordination is beneficial.

The terms cooperation and coordination take various meanings in different contexts. 
There is research done on the coordination of actions of agents to reach a state of order, such as consensus and formation control \cite{chandler2001uav, ren2005consensus, parker1993designing,lavretsky2002f, olfati2006flocking}. 
Others study cooperation in terms of simultaneously performing tasks in a spatially extended manner, like in surveillance \cite{chandler2001uav, liu2021team} and sampling \cite{bellingham2006autonomous}. 
Cooperation is also explored in problems where agents need to react locally to avoid conflict or collision, as can be seen in transportation systems on the road \cite{path2006california}, in the air \cite{tomlin1998conflict}, and in general robotic cooperation problems \cite{liu2019task,kvarnstrom2011planning, hart2020using}. 
We see in these situations that there is little coupling between the agents -- agents do not rely on each other to make progress, but simply need to not be in each other's paths. 
In this work, we are interested in tightly coupled agents that depend on each other for \emph{support} in order to meet their objective.



\begin{figure}[t]
    \centering
    \includegraphics[width=0.47\textwidth]{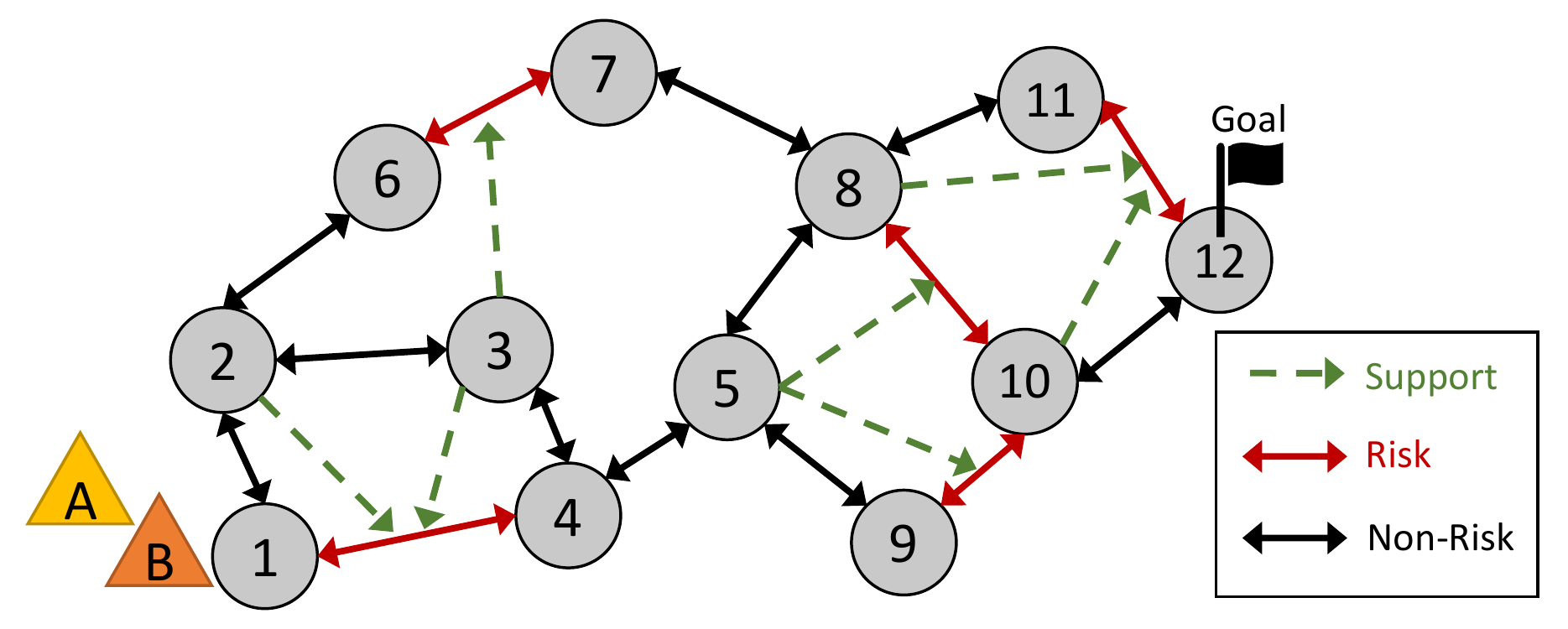}
    \caption{Example of an environment graph with risk edges and supporting nodes.} 
    \label{fig: intro-example}
\end{figure}

We study support in the context of mitigating some risks that exist in the environment. Such risk has been formulated and studied in various ways.
For instance, probability of achieving certain levels of performance in a stochastic setting has been considered \cite{bertsekas2007dynamic, lim2012stochastic, xiao2019explicit, xiao2020robot}.
Others have considered types of risk measures such as coherent risk measures \cite{artzner1999coherent}, like conditional value-at-risk (CVaR) \cite{ahmadi2021risk},\cite{ahmadi2020risk} and entropic value-at-risk (EVAR) \cite{ahmadi2012entropic}. Risk can also be characterized in terms of chance constraints \cite{yang2020chance}. Game theory is considered to account for the risk associated with the uncertainty in the adversary’s behavior \cite{shishika2020game}. Yet, risk can purely be described as the “cost” of traversal \cite{bertsekas2007dynamic}. In this work, we will only use this cost of traversal approach to simplify the analysis.

Cooperation has been studied both in centralized and distributed settings.
Decentralized systems are better at handling scalability and computational efficieny \cite{bhattacharya2010multi, wu2011online}. When it comes to Distributed Continual Planning (DCP) \cite{durfee1999survey}, plan generation and execution can happen concurrently. As it relies on communication between agents, it is better suited for online planning. On the other hand, centralized systems are better for offline planning \cite{loizou2002closed}. It is less likely to suffer from communication costs, information loss, and synchronization issues \cite{khonji2022multi}. A centralized approach is better suited for tightly coupled agents that require a high degree of coordination \cite{luna2011push}. For that reason, we use a centralized approach in our work.

Since we take a centralized approach, ensuring computational tractability becomes a challenge.
Approaches to simplifying a multi-agent planning problem have been widely studied, such as decomposition, graph reformulation, and others \cite{erol1994htn,yu2013multi,luna2011push}. 
In our work, we develop a hierarchical decomposition method on a reformulated graph to solve a multi-agent path planning problem with high coordination.




The contribution of the paper are: (i) the formulation of a new multi-agent coordination problem with strong coupling between teammembers' positions and action; (ii) a conversion of the problem into a simple single-agent path-planning problem; and (iii) development of a hierarchical decomposition scheme that alleviates the curse of dimensionality.

\section{PROBLEM FORMULATION}
\label{sec:problem-formulation}

 We consider a scenario where a team of robots must move from their initial locations to some goal locations.
More specifically, we are interested in a situation where the cost of traversal is affected by the presence and actions of other team members.
In the following, we will introduce the base graph, and then formulate how the edge cost changes based on the ``support'' provided by the teammate.
For conciseness, we will restrict the discussion to a two-agent team, but the idea will generalize to a larger team size.\footnote{The computational complexity will be an important consideration for scalability.} 

The environment is modeled as a graph where nodes represent key locations and the edges represent the traversability between them.
The \bgraph{} is denoted by \(\mathbb{G}=(\nodeset,\edgeset)\), where \(\nodeset\) is a set of nodes, and \(\edgeset\) is a set of edges, \(\edgeset\subset \nodeset\times \nodeset\). 
We assume $\mathbb{G}$ is strongly connected.
The starting positions of the agents are denoted by the node set $\nodeset_0 \subset \nodeset$.
The robots seek to reach a set of goal nodes $\nodeset_g\subset \nodeset$ while minimizing the cost of traversal.
The nominal cost for traversing the edge \(\edge{i}{j}\in \edgeset\) is a given constant, $\cost{i}{j}$ for $i,j\in \nodeset$. 

Let $I_{ab}$ denote a path (set of edges) from $a\in \nodeset$ to $b \in \nodeset$.
We use $\psi_{a,b}$ to denote the minimum cost to move from $a$ to $b$:
\begin{equation}
    \psi_{a,b} = \min_{I_{ab}} \sum_{e_{i,j}\in I_{ab}} c_{i,j}.
\end{equation}
A standard path planning will simply consider this shortest path for each agent.


We now define the \emph{\egraph{}}, which incorporates the notion of risk and support. Each edge $\edge{i}{j}$ is associated with a set of support nodes, $\supnodes{i}{j} \subseteq \nodeset$. If this set is non-empty, then an agent at $v\in\supnodes{i}{j}$ can provide support for the agent traversing $\edge{i}{j}$. 
The action set for an agent $n\in \{A,B\}$ at node $i$ is given as
$\actset{n}{i}=\{\{\act{}{i,}{j}\}_{j\in\mathcal{N}_i}, \act{}{}{s}\}$. 
Where $\mathcal{N}_i$ is the neighborhood of $i$, and $\act{}{i,}{j}$ is the action to move to node $j$ given that it is in the neighborhood of $i$. The action $\act{}{}{s}$ is the support. Note, if we were to consider a team size larger than two, we would have to explicitly denote which agent is being supported by $n$.

Let $\post{t}{} = (\post{t}{A},\post{t}{B})$ be the position of agents A and B at time $t$, and let $\act{t}{}{} = (\act{t}{A}{},\act{t}{B}{})$ be the actions agents A and B take at time $t$. The cost of an action for agent A is given as 
\begin{equation}
\costn{t}{A}(\cdot)=
\begin{cases}
    \cost{i}{j},& \text{if } \act{}{A}{} = \act{}{i,}{j} \text { and } \post{}{B} \notin \supnodes{i}{j} \text{ or } \act{}{B}{} \neq \act{}{s}{}, \\
    \supmove{i}{j},& \text{if } \act{}{A}{} = \act{}{i,}{j} \text{, }\post{}{B} \in \supnodes{i}{j} \text{, and } \act{}{B}{}=\act{}{s}{}, \\
    \supcost, & \text{if } \act{}{A}{}=\act{}{s}{}, \\
    0, &\text{if } \act{}{A}{} \neq \act{}{s}{} \text{ and } \act{}{A}{} \neq \act{}{i,}{j},
\end{cases}
\end{equation}
where $(\cdot)$ represents the arguments $(\post{t}{},\act{t}{}{},\supnodes{i}{j})$.


\begin{figure}[t]
    \centering
    \includegraphics[width=0.40\textwidth]{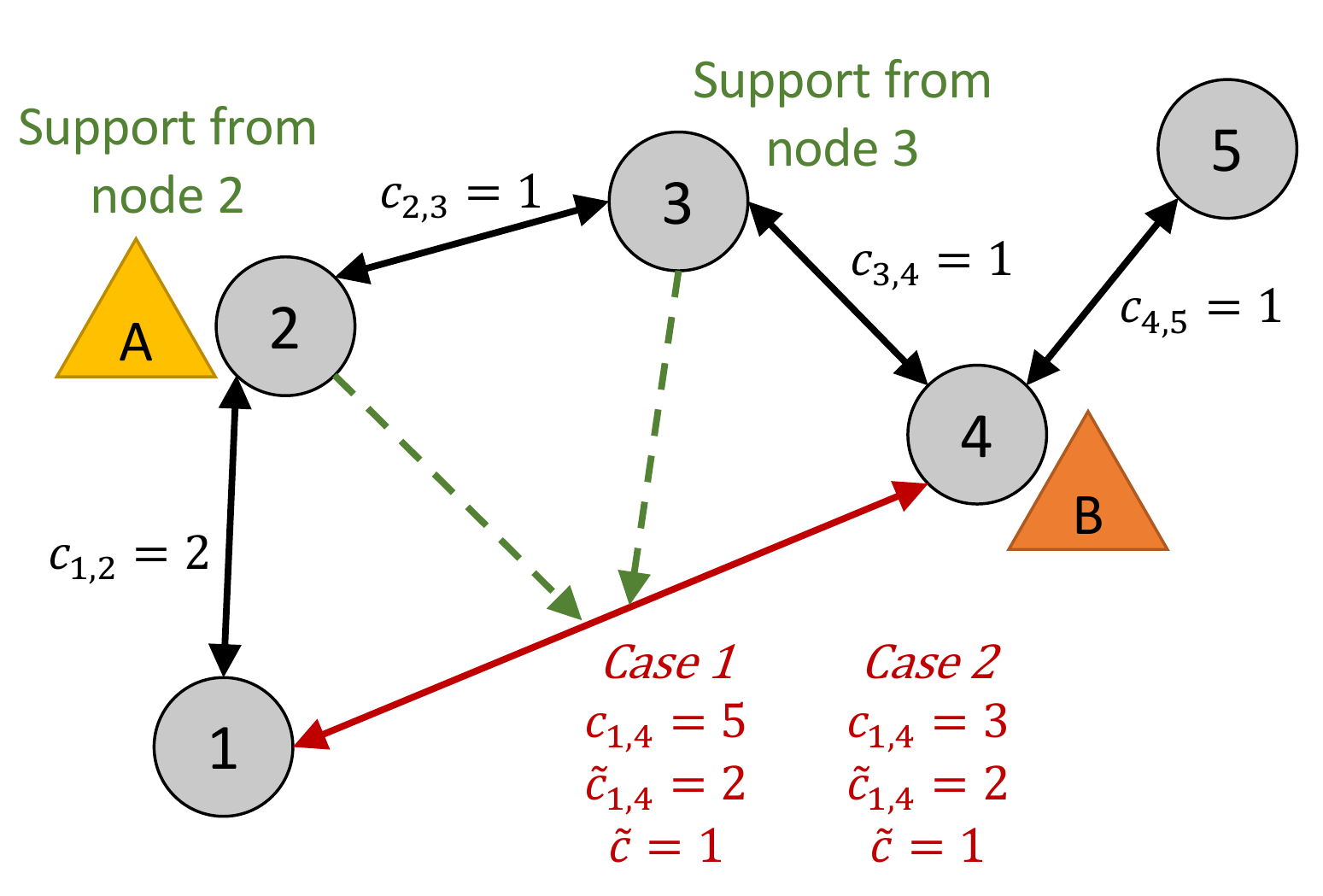}
    \caption{Illustrative example of an environment graph with a risk edge and supporting nodes. Case 1 has a high risk cost. Case 2 has a low risk cost.} 
    \label{fig: example}
\end{figure}

For example, in Fig. \ref{fig: example}, if at $t=1$ agent A is at node 2 (a supporting node) and provides support to agent B as the latter moves from node 1 to node 4, then the cost for agent A would be $\costn{1}{A}=\supcost$ and the cost for agent B would be $\costn{1}{B}=\supmove{1}{4}$. If both agents A and B move together from node 1 to node 4, the cost for the agents would be $\costn{1}{A}=\costn{1}{B}=\cost{1}{4}$.

In order to find the total cost at time $t$, we can simply sum the costs for both agents

\begin{equation}
\Cost{t}{}=\costn{t}{A}+\costn{t}{B}.
\end{equation}

Let $\pathset{n} = \{\paths{n}{1},...,\paths{n}{m}\}$ be the set of action sequences agent $n$ can take from start node to goal node. Where each sequence is the ordered set of actions taken from start to goal from $t=1$ to the time it takes for the agents to reach the goal state, $T$, i.e., $\paths{n}{m}=[\act{1}{n}{},...\act{T}{n}{}]$. 


The sum of the costs for a given sequence of actions $\paths{A}{}\in\pathset{A}$, $\paths{B}{}\in\pathset{B}$ are
\begin{equation}
F(\paths{A}{},\paths{B}{})= \sum_{t=1}^{T} \Cost{t}{}. 
\end{equation}
The goal is to find a pair ($\paths{A*}{},\paths{B*}{}$) that minimizes the total cost, $F$:

\begin{equation} \label{5}
\min_{\paths{A}{}\in \pathset{A},\paths{B}{}\in \pathset{B}} F.
\end{equation}



An illustrative example in Fig. \ref{fig: example}, where agents A and B need to reach goal node 5. To traverse the risk edge, they either use or do not use support depending on how costly, or risky, the edge is.
Agents demonstrate supporting behavior in Case 1.
If agent B traverses risk edge $\edges{1}{4}$ without support from A, the cost for B would be $\cost{1}{4}=5$.
With support from A, the reduced cost for B would be $\supmove{1}{4}=2$. The total cost at this time step $t$ is $\Cost{t}{}=\supmove{1}{4}+\supcost=2+1=3$. 
Thus, agents in this case accrue less costs by supporting each other.
Case 2 is a scenario where the agents do not show supporting behavior in a low risk situation. 
Since $\cost{1}{4}=3$, B can traverse $\edges{1}{4}$ without support from A. The total cost at this time step $t$ is $\Cost{t}{}=\cost{1}{4}+0=3+0=3$ without A's support.

One way to solve the minimization problem in (\ref{5}) is by posing it as an instance of MDP. However, we will introduce a simplification using the concept of Joint State Graph in the next section.

\section{METHOD}
\label{sec:method}


 We first introduce the concept of Joint State Graph (JSG) which simplifies the joint action selection problem into a standard path planning problem on graphs.  To improve scalability, we propose in \ref{Sec_CJSG} a method of decomposing JSG to deal with scalability of the graph. Finally, we conclude the section with a complexity analysis.

\subsection{Joint State Graph}\label{Sec_JSG}

 The problem described in the previous section can be solved using MDP. However, we propose transforming the environment graph into a joint state space graph. The paths on the JSG inherit the actions of the agents, which means we no longer need to consider the action sets. This makes the problem simpler than MDP. Let the JSG be a graph \(\mathbb{J}=(\mathcal{S},\mathcal{L})\), where $\mathcal{S}=\{\state{i}{j}:i,j\in \nodeset\}$ is the set of nodes representing joint states, and \(\mathcal{L}\) is the set of edges. 

Let $\state{1}{1}$ be the initial state assuming that $\nodeset_0=(1,1)$. Let $\state{g}{g}$ be the goal state. 
Edges on JSG are denoted as $\edges{ij}{wk}=(\state{i}{j},\state{w}{k})$ if agent A can move from \(i\) to \(w\) in the \bgraph{}, and agent B can move from \(j\) to \(k\), i.e., \(\edge{i}{w}\in \edgeset\) and \(\edge{j}{k}\in \edgeset\). If agent A does not move, we have $i=w$. Similarly, if B does not move, $j=k$.


Let $\mathcal{C}$ be the set of costs for each edge on the JSG, where an element is denoted as $\CostJ{ij}{wk}$. 
If A remains at $i \in \supnodes{j}{k}$ while B traverses from $j$ to $k \in \mathcal{N}_j$, the cost is defined as $\CostJ{ij}{ik}= \min \{\cost{j}{k}, (\supmove{j}{k}+\supcost)\}$. This is how the edge in JSG subsumes the action selection in the original problem. 
However, if $i \notin \supnodes{j}{k}$, then the  cost is simply $\CostJ{ij}{ik} = \cost{j}{k}$. 
The case when A moves is defined similarly.
This explains lines 7-17 of Algorithm \ref{Alg_defJSG}.
If A traverses from $i$ to $w \in \mathcal{N}_i$ and B traverses from $j$ to $k \in \mathcal{N}_j$, then we add the nominal costs $\CostJ{ij}{wk} = \cost{i}{w} + \cost{j}{k}$. In the case that both agents are stationary, the cost is $\CostJ{ij}{ij} = 0$
The details of the JSG construction are in Algorithm \ref{Alg_defJSG}.


Let $\pathsetJ = \{\pathJ{1},...,\pathJ{m}\}$ be the set of paths on the JSG, where an element $\pathJ{}=\{\edges{11}{kw},...,\edges{ij}{gg}\}$ is the set of edges from the initial state to the goal state. Then, the cost of each path $\pathJ{}$ is given by the sum of the cost of the edges on that path in JSG,
\begin{equation}
 Q({u}) = \sum_{\edges{ij}{wk} \in \pathJ{}} \CostJ{ij}{wk}.
\end{equation}
Using a standard shortest-path algorithm, we can find the optimal path that minimizes $Q({u})$:
\begin{equation}
 Q({u}^{\star}) =\min_{\pathJ{} \in \pathsetJ{}}  Q({u}).
\end{equation} 

An example of JSG is shown in Fig.~\ref{fig: jsg-example},
\begin{figure}[t]
    \centering
    \includegraphics[width=0.45\textwidth]{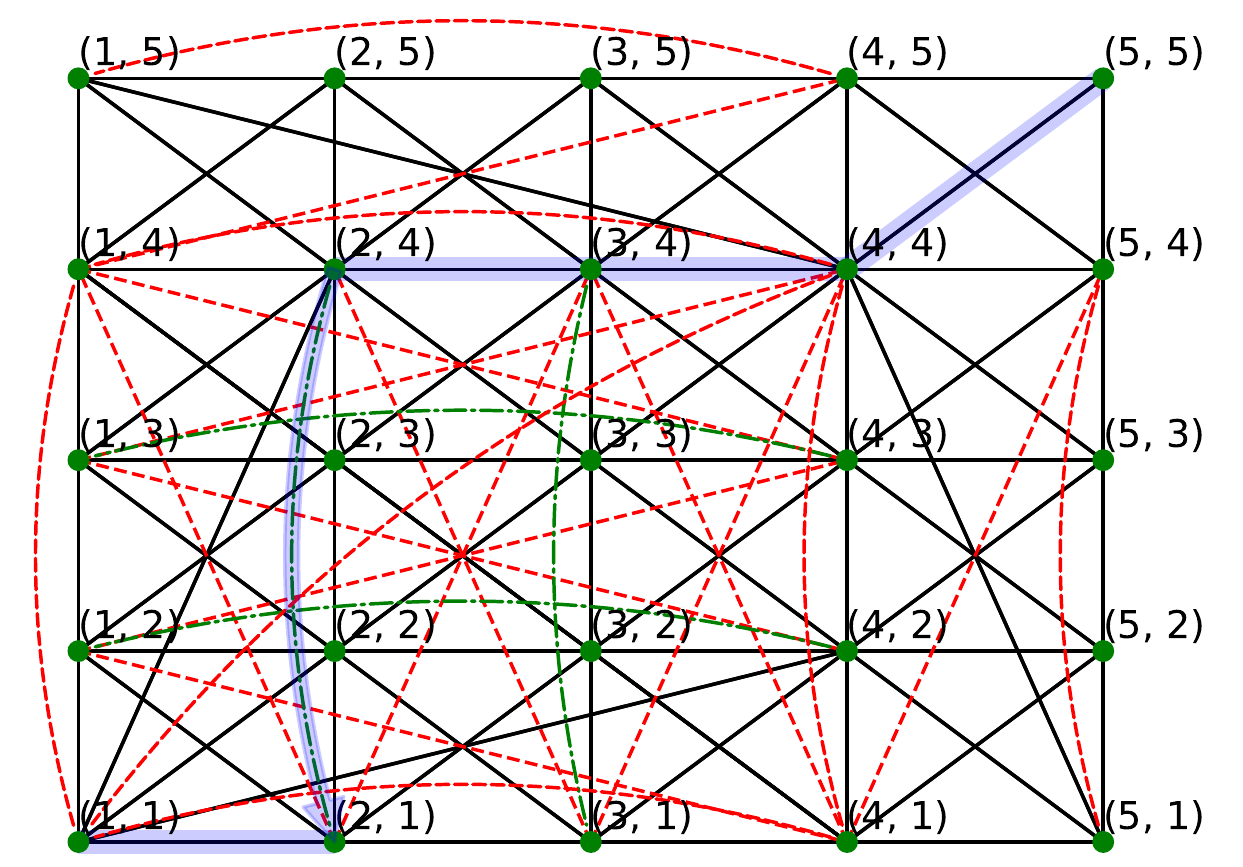}
    \caption{Code-generated Joint State Graph from the 5-node environment graph. Black edges are non-risky edges, red edges are risk with no support, and green edges are risk with support.}
    \label{fig: jsg-example}
\end{figure}
which corresponds to the environment graph shown in Fig. \ref{fig: example}.
The edges highlighted in blue indicates the optimal path $u^\star$ for Case~1 in Fig.~\ref{fig: example}. 
Importantly, we can easily identify the original actions from the edges selected in this JSG: e.g., the use of edge $e_{21,24}$ indicates that A at node 2 supported B who moved from node 1 to 4.

Although planning on JSG is conceptually simple, it can become computationally expensive with greater graph sizes. The next section addresses this issue.

\begin{algorithm2e}
    \label{Alg_defJSG}
    \caption{JSG Construction.}
    \SetAlgoLined
    \textbf{Input} $\mathbb{G}=(\mathcal{V},\mathcal{E})$, $\mathcal{Z}_{i,j}$.\\
    \textit{Let} $\mathbb{J}=(\mathcal{S}, \mathcal{L})$\\
    \For{$\forall i,j\in \mathcal{V}$}
    {
    Add $s_{ij}$ to $\mathcal{S}$
    }
    \For{any two distinct elements $s_{ij},s_{wk} \in \mathcal{S}$}
    {\uIf{$i=w$, $j\neq k$ and $k\in\mathcal{N}_{j}$ }
    {\uIf{$i\in$ $\mathcal{Z}_{jk}$ }
   {
   Add edge $\edge{ij}{wk}$ to $\mathcal{L}$. Define its cost as $C_{ij,wk}=\min\{c_{j,k}, (\supmove{j}{k}+\supcost)\}$\\
   }    
   \Else{
    Add edge $\edge{ij}{wk}$ to $\mathcal{L}$. Define its cost as $C_{ij,wk}=c_{j,k}$\\
    }
    }
    {\uElseIf{$i\neq w$, $j=k$ and $w\in\mathcal{N}_i$}
    {\uIf{$j\in \mathcal{Z}_{iw} $}
    {
    Add edge $\edge{ij}{wk}$ to $\mathcal{L}$. Define its cost as $C_{ij,wk}=\min\{c_{i,w},(\supmove{i}{w}+\supcost)\}$\\
    }    
    \Else{
    Add edge $\edge{ij}{wk}$ to $\mathcal{L}$. Define its cost as $C_{ij,wk}=c_{i,w}$\\
    }}}
    {\ElseIf{$k\in\mathcal{N}_{j}$ and $w\in\mathcal{N}_i$}{
    Add edge $\edge{ij}{wk}$ to $\mathcal{L}$. Define its cost as $C_{ij,wk}=c_{i,w}+c_{j,k}$
    }}
    }
   \textbf{Return} $\mathbb{J}=(\mathcal{S}, \mathcal{L})$ and the associated costs $C_{ij,wk}$.
\end{algorithm2e}

\subsection{Search Algorithm: Critical Joint State graph}\label{Sec_CJSG}
 In this section, we introduce a new search algorithm based on constructing a Critical Joint State Graph (CJSG), which has reduced computational complexity compared with the straightforward JSG method in Sec. \ref{Sec_JSG}. 
Note that the Joint State Graph $\mathbb{J}$ has $|S|=|\nodeset|^2$ number of nodes, leading to high complexity if directly used for planning.
To address this issue, our key idea is to classify the agents' movements into coupled and decoupled modes, where only the coupled movements need to be planned on a joint state representation, and the decoupled movements can be independently planned by each agent on \bgraph{} $\mathbb{G}$.
As visualized in Fig. \ref{fig_dcp}, the environment graph formulated in Sec. \ref{sec:problem-formulation} builds on a base graph $\mathbb{G}$, then associates some of its edges with a set ($\supnodes{i}{j}$) of support nodes. 
Depending on whether the edges in $\mathbb{G}$ have at least one support node, we define a risk edge set $\mathcal{E}_R$ such that $\forall e_{i,j}\in \mathcal{E}_R$, $\supnodes{i}{j}\neq\emptyset$. 
Note that the support graph in Fig. \ref{fig_dcp} does not follow the standard `graph' definition in mathematics. It only describes a supporting relationship between nodes and risk edges which we use later to study coupled movements of agents.
\begin{figure}[h]
    \centering
    \includegraphics[width=0.48\textwidth]{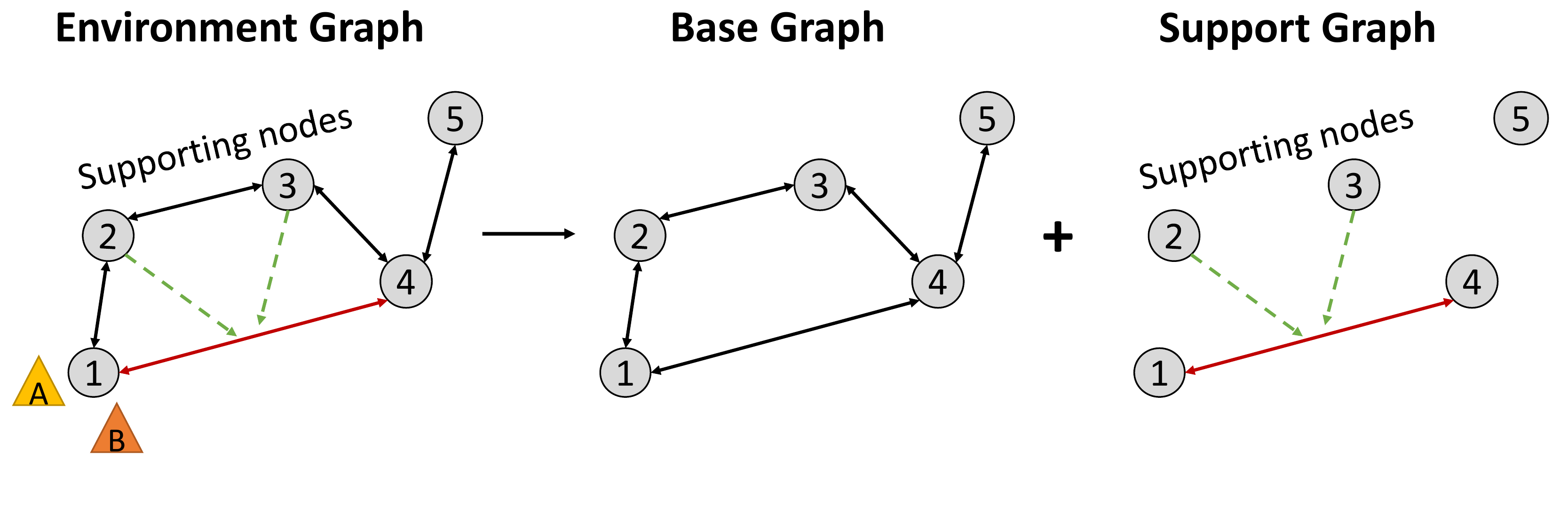}
    \caption{Example for environment graph decomposition. In $\mathbb{G}$, node 2 and node 3 can support the edge between node 1 and node 4. }
    \label{fig_dcp}
\end{figure}

We start by considering the costs for decoupled movements of the two agents. 
Recall that $\psi_{a,b}$ denotes the minimum cost for an agent to move from $a$ to $b$ on the base graph $\mathbb{G}$, the following statement holds. 

\begin{lemma}\label{Lm_base_graph} \longthmtitle{Decoupled planning on base graph}
On graph $\mathbb{G}$, consider the first agent moves from node $i$ to $w$; the second agent moves from node $j$ to $k$. Let  $R_{ij,wk}$ be the minimum cost for the two agents to complete the movement without performing supporting behaviors. Then
$$R_{ij,wk}= \psi_{i,w}+\psi_{j,k}.$$
\end{lemma}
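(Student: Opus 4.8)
The plan is to exploit the fact that, once supporting behaviors are forbidden, the cost incurred by each agent depends only on its own sequence of actions, so the joint optimization decouples into two independent shortest-path problems on $\mathbb{G}$.

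First I would make the decoupling precise. If neither agent ever executes the support action $\act{}{}{s}$, then at every time step $t$ only the first and last cases in the definition of the per-agent cost can apply: agent A pays $\cost{i}{j}$ whenever it traverses an edge $\edge{i}{j}$ and pays $0$ whenever it stays put, and symmetrically for B. Hence $\Cost{t}{} = \costn{t}{A} + \costn{t}{B}$ with each summand determined by a single agent, and summing over $t$ gives $F(\paths{A}{},\paths{B}{}) = F_A + F_B$, where $F_A = \sum_{t} \costn{t}{A}$ is exactly the total nominal edge cost of the walk that agent A traces from $i$ to $w$ in $\mathbb{G}$, and $F_B = \sum_{t} \costn{t}{B}$ likewise for B moving from $j$ to $k$.

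Next I would establish the two inequalities. For the lower bound, any feasible joint plan (without support) induces for agent A a walk in $\mathbb{G}$ from $i$ to $w$, whose cost $F_A$ is at least the minimum cost over all $i\!\to\!w$ paths, i.e.\ $F_A \ge \psi_{i,w}$, and similarly $F_B \ge \psi_{j,k}$; therefore $R_{ij,wk} \ge \psi_{i,w} + \psi_{j,k}$. For the upper bound, I would exhibit a joint plan attaining equality: let A follow a cost-minimizing path from $i$ to $w$ and B follow a cost-minimizing path from $j$ to $k$ (both exist since $\mathbb{G}$ is strongly connected), and let whichever agent arrives first execute the stationary action — the action that is neither a move nor $\act{}{}{s}$, which costs $0$ — until the common horizon $T$ at which the other agent arrives. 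This plan never uses a support action, is feasible, and has total cost exactly $\psi_{i,w} + \psi_{j,k}$, so $R_{ij,wk} \le \psi_{i,w} + \psi_{j,k}$. Combining the bounds gives the claim.

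The argument is essentially bookkeeping; the only point that requires care is the synchronization step, namely padding the shorter path with zero-cost ``wait'' actions so that both agents terminate at a common time $T$ while keeping the plan admissible and leaving the accumulated cost unchanged. Strong connectivity of $\mathbb{G}$ guarantees that the required shortest paths exist, and the explicit $0$-cost case in the definition of $\costn{t}{A}$ guarantees the padding is free, so I expect no genuine difficulty beyond stating these facts cleanly and noting that ``no supporting behavior'' is precisely what removes the $\min\{\cost{j}{k},\,\supmove{j}{k}+\supcost\}$ alternative, leaving only the nominal costs.
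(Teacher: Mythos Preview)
Your proposal is correct and follows exactly the same idea as the paper's proof---namely, that once support actions are excluded, the per-agent costs decouple and each agent independently minimizes over paths in $\mathbb{G}$, yielding $\psi_{i,w}+\psi_{j,k}$. The paper dismisses this as trivial in one sentence, whereas you spell out the two inequalities and the synchronization padding explicitly; your version is more careful but not a different approach.
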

 \begin{proof}
 The proof is trivial. Since the two agents do not perform supporting behaviors, their movements and associated costs can be computed individually on the base graph $\mathbb{G}$, which by definition are $\psi_{i,w}$ and $\psi_{j,k}$. 
 \end{proof}

Now, to characterize the coupled movements of the two agents, we construct a Critical Joint State Graph (CJSG), $\mathbb{T}=(\mathcal{M}, \mathcal{H})$, where $\mathcal{M}$ and $\mathcal{H}$ are the node set and edge set of $\mathbb{T}$, respectively. For any $h_{ij,wk}\in\mathcal{H}$, let $W_{ij,wk}$ denote the cost associated with this edge.
Details of CJSG construction are summarized in Algorithm \ref{Alg_defCJSG}.  

\begin{algorithm2e}
    \label{Alg_defCJSG}
    \caption{CJSG Construction}
    \SetAlgoLined
    \textbf{Input} $\mathcal{E}_R$, $s_{11}$, $s_{gg}$, $\mathcal{Z}_{i,j},R_{ij,wk}$.\\
    \textit{Let} $\mathbb{T}=(\mathcal{M}, \mathcal{H})$\\
    \For{each $e_{i,j} \in \mathcal{E}_R$}
    {\For{each $k\in\supnodes{i}{j}$}
    {
    Add $s_{k i}$ and $s_{k j}$ to $\mathcal{M}$. \label{step_add_node}\\
    Add $s_{i k}$ and $s_{j k}$ to $\mathcal{M}$. \label{step_add_node2}\\
    }
    Add $s_{11}=(1,1)$ and $s_{gg}=(g,g)$ to $\mathcal{M}$ (if they are not already in $\mathcal{M}$).\label{step_add_node3}\\
    \For{any two distinct elements
    $s_{ij}, s_{wk}\in\mathcal{M}$}
    {\uIf{
    $e_{j,k}\in\mathcal{E}_R$ and $i=w\in \supnodes{j}{k}$ 
    }
    {Add edge $h_{ij,wk}$ to $\mathcal{H}$. Define its cost as $W_{ij,wk}= \min\{(\supmove{j}{k}+\supcost),R_{ij,wk}\}$\label{step_add_cost}\\
    }
    \uElseIf{$e_{i,w}\in\mathcal{E}_R$ and $j=k\in \supnodes{i}{w}$}
    {Add edge $h_{ij,wk}$ to $\mathcal{H}$. Define its cost as $W_{ij,wk}= \min\{(\supmove{i}{w}+\supcost),R_{ij,wk}\}$. \label{step_add_cost2}
    }
    \Else{
        Add edge $h_{ij,wk}$ to $\mathcal{H}$. Define the associated cost as $W_{ij, wk}=R_{ij,wk}$. \label{step_add_cost3}
    }
    }
    }
    \textbf{Return} $\mathbb{T}=(\mathcal{M}, \mathcal{H})$ and the associated costs $W_{ij,wk}$.
\end{algorithm2e}

\begin{remark}\longthmtitle{Algorithm \ref{Alg_defCJSG} explained}  \label{RM_Alg_defCJSG}
In CJSG, we consider the node of the graph as any joint state that the two agents (i) can initiate or complete supporting behaviors (c.f. steps \ref{step_add_node} and \ref{step_add_node2}), (ii) at their start or goal position of the planning task (c.f. step \ref{step_add_node3}). We let CJSG be fully connected. The edge costs are associated with two agents moving over the base graph (c.f. step \ref{step_add_cost3}) or a possible lower cost when they perform a support behavior (c.f. steps \ref{step_add_cost} or \ref{step_add_cost2}, depending on who supports who).
\end{remark}

To provide a toy example, given the environment graph in Fig. \ref{fig_dcp}, we can construct the corresponding CJSG as shown in Fig. \ref{fig_CJSG}. Using the support graph, the \emph{critical joint states} are the highlighted states in the middle part of Fig. \ref{fig_CJSG} as well as the initial and goal states. By computing edge costs according to Lemma \ref{Lm_base_graph} and Algorithm \ref{Alg_defCJSG}, the CJSG is constructed as shown on the right side of Fig. \ref{fig_CJSG}. Red edges are edges under supporting behaviors such that $\supmove{j}{k}$ and $\supmove{i}{w}$ are available. The blue edges are associated with $R_{ij,wk}$ where two agents are decoupled and individually seek optimal paths on the base graph.

\begin{figure*}[htbp]
    \centering
    \includegraphics[width=0.95\textwidth]{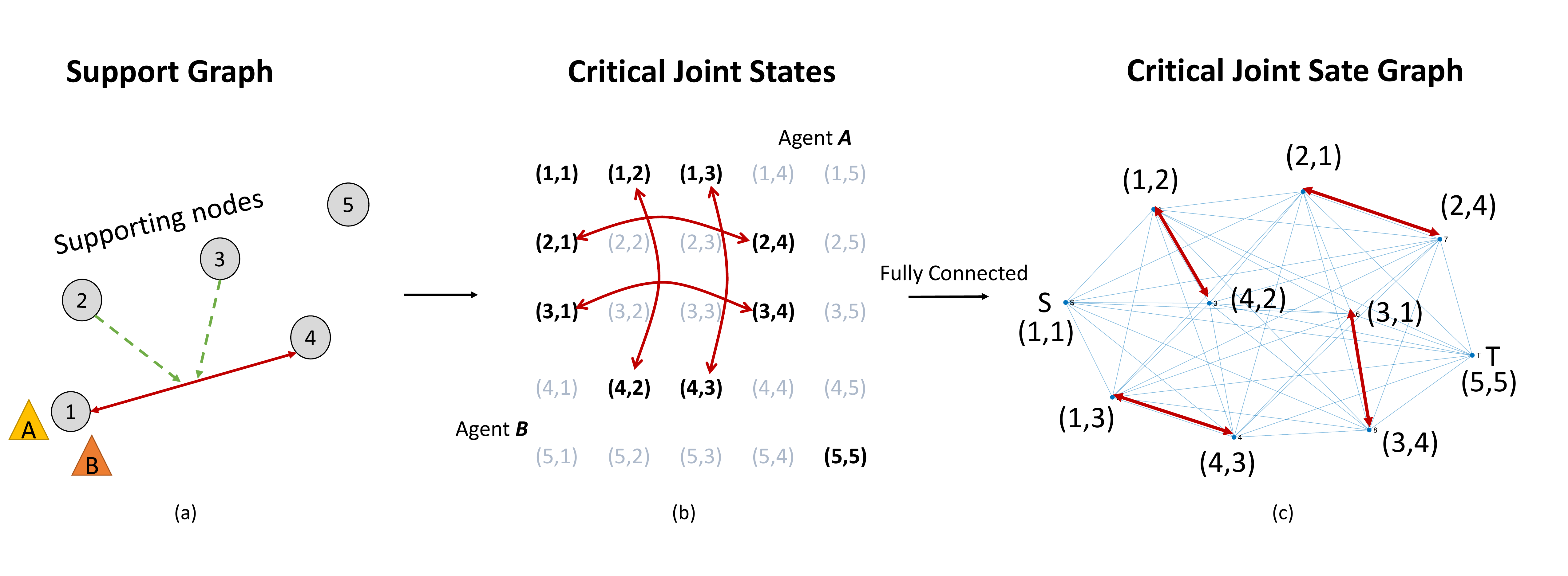}
    \caption{Example for CJSG Construction. Based on support graph in (a), we can observe all critical joint states as depicted in (b). By fully connecting these critical joint states with the initial states and goal states, we obtain the CJSG $\mathbb{T}$ as shown in (c).}
    \label{fig_CJSG}
\end{figure*}

After constructing CJSG, we present our search algorithm. We define a path composition operation. Suppose $u_1=\{e_{a,b},e_{c,d},\cdots,e_{i,j}\}$, $u_2=\{e_{g,h},e_{r,t},\cdots,e_{k,\ell}\}$. Then $u_1\oplus u_2 =\{e_{ag,bh},e_{cr,dt},\cdots,e_{il,j\ell}\}$. When the two paths do not have the same length, we extend the shorter one by repeating its final node (so that in the graph representation, it stays at that node). For example, if $u_1$ is two elements shorter than $u_2$ then we extend it by $u_1=\{e_{a,b},e_{c,d},\cdots,e_{i,j},e_{j,j},e_{j,j}\}$. The length of the composed path equals the length of the $u_1$ or $u_2$, whichever is longer.
Based on CJSG and path composition, our search algorithm is presented in Algorithm \ref{Algorithm_search_CJSG}, where \textit{PathPL} can be any path planning algorithm, i.e., Dijkstra's algorithm \cite{CTH-LCE-RRL-SC:22}, that can obtain a shortest path between two nodes.

\begin{algorithm2e}
	\label{Algorithm_search_CJSG}
	\caption{Path Planning based on CJSG.}
	\SetAlgoLined
	\textbf{Input}  $\mathbb{T}=(\mathcal{M}, \mathcal{H})$, $\mathbb{G}=(\nodeset,\edgeset)$.\\
    $\widehat{u} \leftarrow$ PathPL($\mathbb{T}$, $s_{11}$, $s_{gg}$) \\
    \For{each ${h}_{ij,wk}$ in $\widehat{u}$}
    {
    \uIf{
    $W_{ij,wk}=R_{ij,wk}$
    }
    {Add [PathPL$(\mathbb{G},i,w)$$\oplus$PathPL$(\mathbb{G},i,k)$)]  to ${u}^{\dagger}$ \label{step_recons}} 
    \ElseIf{
    $W_{ij,wk}= \supmove{i}{w}+\supcost$ or $W_{ij,wk}= \supmove{j}{k}+\supcost$
    }
    {Add $[e_{ij,wk}]$ to ${u}^{\dagger}$}
    }
	\textbf{Return} ${u}^{\dagger}$.
\end{algorithm2e}

\begin{remark}\longthmtitle{Algorithm \ref{Algorithm_search_CJSG} explained}  \label{RM_Alg_searchCJSG}
We first perform path planning on $\mathbb{T}$. Note that some edges ${h}_{ij,wk}$, when $W_{ij,wk}=R_{ij,wk}$, are associated with paths of two agents planned in $\mathbb{G}$ using Lemma \ref{Lm_base_graph}. We use step \ref{step_recons} to reconstruct these edges back to paths that the agents can traverse on the environment. Furthermore, although step \ref{step_recons} recalls a path planning process, this planning should have already been computed by Lemma \ref{Lm_base_graph} when executing Algorithm \ref{Alg_defCJSG}.
\end{remark}

\begin{lemma} \label{Lm_CJSG_Veri} \longthmtitle{Effectiveness of the critical-joint state graph} The following statements hold.\\
(i) Any path ${u}^{\dagger}$ reconstructed from a path $\widehat{u}$ on CJSG is a feasible path for the two agents on the environment graph, thus, $Q({u}^{\dagger}) \ge Q({u}^{\star})$. \\
(ii) The optimal path planned from the CJSG has the same minimum cost as the optimal path planned directly from the JSG, thus, $Q({u}^{\dagger}) \le Q({u}^{\star})$.
\end{lemma}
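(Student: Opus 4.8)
The plan is to treat the two inclusions separately. Statement (i) is a feasibility-plus-accounting fact: Algorithm~\ref{Algorithm_search_CJSG} turns $\widehat{u}$ into a legal walk on $\mathbb{J}$ whose cost does not exceed the cost of $\widehat{u}$ on $\mathbb{T}$, so the fact that $u^{\star}$ is a minimum-cost path on $\mathbb{J}$ immediately forces $Q(u^{\dagger})\ge Q(u^{\star})$. Statement (ii) is the substantive direction: starting from the optimal JSG path $u^{\star}$, I would fold it down onto $\mathbb{T}$ to build a CJSG path whose cost is at most $Q(u^{\star})$, then use the optimality of \textit{PathPL} on $\mathbb{T}$ together with the accounting bound from (i) to conclude.

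For (i), I would process the edges $h_{ij,wk}$ of $\widehat{u}$ one at a time. If $W_{ij,wk}=R_{ij,wk}$, then step~\ref{step_recons} composes two base-graph shortest paths via $\oplus$; the padding steps created by $\oplus$ are zero-cost self-loops, so the result is a feasible $s_{ij}\to s_{wk}$ walk on $\mathbb{J}$ whose cost is at most the sum of the nominal edge costs along it, i.e. $\psi_{i,w}+\psi_{j,k}=R_{ij,wk}$ by Lemma~\ref{Lm_base_graph}. Otherwise $W_{ij,wk}$ equals a support term $\supmove{i}{w}+\supcost$ or $\supmove{j}{k}+\supcost$, and Algorithm~\ref{Algorithm_search_CJSG} keeps the single edge $e_{ij,wk}$, which is a legal JSG move (one agent idle on a support node, the other crossing the associated risk edge); reading off Algorithm~\ref{Alg_defJSG} and using $\psi_{a,a}=0$, $\psi_{a,b}\le c_{a,b}$, its JSG cost is $C_{ij,wk}=\min\{c_{j,k},\supmove{j}{k}+\supcost\}\le W_{ij,wk}$ (symmetrically with the roles of the agents swapped). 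Concatenating these walks --- they chain up because $\widehat{u}$ is a path on $\mathbb{T}$ --- yields a feasible $s_{11}\to s_{gg}$ path $u^{\dagger}$ on $\mathbb{J}$ with $Q(u^{\dagger})\le\sum_{h\in\widehat{u}}W_{ij,wk}$, and then $Q(u^{\dagger})\ge Q(u^{\star})$ because $u^{\star}$ is optimal on $\mathbb{J}$.

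For (ii), the key structural observation is that in $\mathbb{J}$ a step gets a support-discounted cost only when one agent remains stationary \emph{on a support node} of the risk edge the other agent crosses --- these are the only occurrences of $\min\{\cdot,\cdot\}$ in Algorithm~\ref{Alg_defJSG} --- and both endpoints of such a step are nodes of $\mathbb{T}$ by steps~\ref{step_add_node}--\ref{step_add_node2} of Algorithm~\ref{Alg_defCJSG}. Together with $s_{11},s_{gg}\in\mathcal{M}$, these critical states visited by $u^{\star}$ split it into consecutive segments of two kinds: (a) a single support step $s_{ij}\to s_{wk}$, whose JSG cost $\min\{c_{j,k},\supmove{j}{k}+\supcost\}$ is at least $W_{ij,wk}=\min\{\supmove{j}{k}+\supcost,R_{ij,wk}\}$ since here $R_{ij,wk}=\psi_{j,k}\le c_{j,k}$ (symmetrically if the other agent is the idle one); and (b) a maximal sub-path $s_{ij}\to s_{wk}$ along which no support discount ever occurs, so every step is charged a sum of nominal base-graph edge costs and the segment total is at least $\psi_{i,w}+\psi_{j,k}=R_{ij,wk}\ge W_{ij,wk}$ (the edge $h_{ij,wk}$ exists and has cost $\le R_{ij,wk}$ because $\mathbb{T}$ is fully connected). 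Summing over segments gives a CJSG path $\widehat{u}_0$ from $s_{11}$ to $s_{gg}$ with $\sum_{h\in\widehat{u}_0}W_{ij,wk}\le Q(u^{\star})$; since \textit{PathPL} returns a minimum-cost CJSG path $\widehat{u}$, we get $\sum_{h\in\widehat{u}}W_{ij,wk}\le\sum_{h\in\widehat{u}_0}W_{ij,wk}\le Q(u^{\star})$, and combining with the bound $Q(u^{\dagger})\le\sum_{h\in\widehat{u}}W_{ij,wk}$ from (i) gives $Q(u^{\dagger})\le Q(u^{\star})$.

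The per-edge cost comparisons are routine --- they amount to unfolding the minima in Algorithms~\ref{Alg_defJSG} and \ref{Alg_defCJSG} and using $\psi_{a,a}=0$, $\psi_{a,b}\le c_{a,b}$ --- so I expect the main obstacle to be pinning down the structural claim in (ii): that every use of support in $\mathbb{J}$ appears as a transition between two CJSG nodes, so that cutting $u^{\star}$ at its critical states genuinely isolates the support steps and leaves only decoupled sub-paths in between, which Lemma~\ref{Lm_base_graph} then bounds below by the matching $R_{ij,wk}$. It is also worth verifying a few harmless cases explicitly: $u^{\star}$ may pass through a CJSG node without using support there (this only adds an extra, cost-consistent breakpoint), a decoupled sub-path may traverse risk edges without support (still charged at nominal cost, so the $\ge R_{ij,wk}$ bound is unchanged), and the $\oplus$ padding never alters a cost.
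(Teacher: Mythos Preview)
Your proposal is correct and follows essentially the same approach as the paper: feasibility of the reconstructed path for (i), and a decomposition of $u^{\star}$ at the critical joint states into support steps and decoupled sub-paths for (ii). Your version is in fact more careful than the paper's, since you track the intermediate inequality $Q(u^{\dagger})\le\sum_{h\in\widehat{u}}W_{ij,wk}$ explicitly and work with inequalities throughout, whereas the paper asserts ``same cost'' in (ii) where ``at most the same cost'' is what is actually established.
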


 \begin{proof}
 We prove the two statements in Lemma \ref{Lm_CJSG_Veri}. \\
(i) Given any edge $e_{ij,wk}$ in path ${u}^{\dagger}$, there are two situations. One is $W_{ij,wk}= \supmove{i}{w}+\supcost$ or $W_{ij,wk}= \supmove{j}{k}+\supcost$, and another is $W_{ij,wk}=R_{ij,wk}$. For the first case, the path in the environment graph represents one agent staying at the support node where another agent traverses the corresponding supported edge. It is a feasible path for two agents on the environment graph by definition. For the second case, according to Lemma 1, two agents move independently and decoupled. As each agent moves from one node to another on the environment graph, the path obtained from the algorithm is feasible since the planned base graph $\mathbb{G}$ is a subgraph of the environment graph. Therefore, $u^{\dagger}$ is always a feasible path for two agents on the environment graph. Since $u^{\star}$ is the optimal path for the environment graph planned from the JSG, we have $Q({u}^{\dagger}) \ge Q({u}^{\star})$. \\
(ii) We prove this by showing that for any optimal path planned from the JSG, there is a path on CJSG with the same cost. Considering the optimal path planned from JSG, it consists of two agents' movements with and (possibly) without supporting behaviors. Let the $u^{\star}$ be divided into different segments according to the above attribute. It is obvious that all such segments are connected by a joint state that initiates the supporting behavior and a state that completes the supporting behavior, which are essentially critical-joint states. Therefore, if we consider each segment independently, the optimal path over this segment must always be associated with the edges on the CJSG, either in the form of decoupled paths on the base graph or by performing a supporting behavior. Thus, for any optimal path planned from the JSG, there is a path on CJSG with the same cost. This together with the fact that ${u}^{\dagger}$ is the optimal path on CJSG leads to the conclusion $Q({u}^{\dagger}) \le Q({u}^{\star})$. We complete the proof.
\end{proof}

\subsection{Comparison of Computational Complexity}\label{Sec_complex}
 We quantify the computational complexity of the search algorithm applied to the joint state graph (JSG) and the critical joint state graph (CJSG), to demonstrate the advantage of CJSG over the JSG.

For JSG $\mathbb{J}=(\mathcal{S}, \mathcal{L})$ the graph construction complexity is given by the addition of complexities of nodes and edges. The complexity for the nodes is $\mathcal{O}(|\mathcal{S}|)=\mathcal{O}(|\mathcal{V}|^2)$. Similarly, the complexity of edges is $\mathcal{O}(|\mathcal{L}|)$ which equals $\mathcal{O}(|\mathcal{V}|^4)$ in worst case senario when edges are fully connected. Thus, the graph construction complexity of JSG equals
\begin{align}\label{eq_jsg_const}
    \mathcal{O}_{\text{Jconst}} = \mathcal{O}(|\mathcal{V}|^2)+ \mathcal{O}(|\mathcal{V}|^4).
\end{align}
Since the total number of nodes in JSG is $\mathcal{O}(|\mathcal{V}|^2)$, the search complexity when edges are fully connected follows
\begin{align}\label{eq_jsg_plan}
    \mathcal{O}_{\text{Jplan}} = \mathcal{O}(|\mathcal{V}|^4).
\end{align} 
Combining equations \eqref{eq_jsg_const} and \eqref{eq_jsg_plan}, complexity of JSG becomes
\begin{align}
    \mathcal{O}_{\text{JSG}} = \mathcal{O}(|\mathcal{V}|^4).
\end{align}

Similarly, the construction complexity of CJSG $\mathbb{T}=(\mathcal{M}, \mathcal{H})$ can be expressed as the addition of construction complexities for nodes and edges. For nodes, the complexity simply equals $\mathcal{O}(|\mathcal{M}|)$. For edges, the complexity equals $\mathcal{O}(|\mathcal{M}|^2)+\mathcal{O}(|\mathcal{V}|^2\log(|\mathcal{V}|))$, where the first term is the number of edges in $\mathbb{T}$, which is fully connected. The second term comes from Lemma \ref{Lm_base_graph}, which, in the worst case, needs to compute the shortest path between any pair of nodes in $\mathbb{G}$. The complexity of $\mathcal{O}(|\mathcal{V}|^2\log(|\mathcal{V}|))$ assumes the use of Johnson's algorithm \cite{CTH-LCE-RRL-SC:22}. Thus, the construction complexity of CJSG equals 
\begin{align}\label{eq_complx_cons}
    \mathcal{O}_{\text{const}}=\mathcal{O}(|\mathcal{M}|)+\mathcal{O}(|\mathcal{M}|^2)+\mathcal{O}(|\mathcal{V}|^2\log(|\mathcal{V}|)).
\end{align}
The search complexity of CJSG is determined by the number of nodes in $\mathbb{T}$, which follows
\begin{align}\label{eq_complx_search}
    \mathcal{O}_{\text{plan}}=\mathcal{O}(|\mathcal{M}|^2)+\mathcal{O}(|\mathcal{M}|).
\end{align}
where for the first term, we assume the use of Dijkstra's Algorithm \cite{CTH-LCE-RRL-SC:22} on $\mathbb{T}$, to obtain $\widehat{u}$. The second term is associated with reconstructing $u^{\dagger}$ from $\widehat{u}$. Although Algorithm \ref{Algorithm_search_CJSG} embeds search functions in step \ref{step_recons}, all the planning must have been computed by Lemma \ref{Lm_base_graph} when executing Algorithm \ref{Alg_defCJSG}. No replanning is needed. By combining \eqref{eq_complx_cons} and \eqref{eq_complx_search}, one has 
\begin{align}\label{eq_complx_cons2}
    \mathcal{O}_{\text{CJSG}}=\mathcal{O}(|\mathcal{M}|^2)+\mathcal{O}(|\mathcal{V}|^2\log(|\mathcal{V}|)).
\end{align}

\begin{remark}\longthmtitle{Comparison of complexity}\label{Rm_complexity}
To compare the complexities of $\mathcal{O}_{\text{CJSG}}$ and $\mathcal{O}_{\text{JSG}}$, we only need to compare $\mathcal{O}(|\mathcal{M}|^2)$ and $\mathcal{O}(|\mathcal{V}|^4)$. Note that in most scenarios, we assume the number of support edges in $\mathbb{G}$ is small. As a consequence, the number of critical-joint states is far less than that of common joint states, i.e. $|\mathcal{M}|\ll |\mathcal{V}|^2$. Then the proposed Algorithm \ref{Algorithm_search_CJSG} based on CJSG is significantly more efficient than the traditional JSG method. The worst boundary scenario happens when support edges widely exist in $\mathbb{G}$, in this case, one has $|\mathcal{M}|\to |\mathcal{V}|^2$, but $|\mathcal{M}|$ is still upper bounded by $|\mathcal{V}|^2$ due to the fact that critical joint states are subsets of joint states. Thus, $\mathcal{O}_{\text{CJSG}}$ is always no worse than $\mathcal{O}_{\text{JSG}}$.

\end{remark}


\section{NUMERICAL RESULTS}\label{sec:numerical-results}
 In this section, we evaluate JSG and CJSG on the basis of graph construction and path planning under different conditions.\footnote{\url{https://github.com/RobotiXX/team-coordination}} Our experimental design allows us to gain insights through comparative analysis of JSG and CJSG in terms of scalability and performance. The experiments are carried out on a MacBook Pro with 2.8 GHz 8 core CPU and 8GB of RAM.


For both graph construction and path planning analyses, a random graph generator is used to generate environment graphs with varying number of nodes and edges. 
We control the ratio of risk edges to the total edges to be 1/5, 1/3, and 1/2. For different number of nodes and risk edges ratio in an environment graph, we calculate graph construction time and shortest path planning time for JSG and CJSG (Table~\ref{table:result_table}). We repeat every experiment trial five times for statistical significance.





\begin{figure}[t]
    \centering
    \includegraphics[width=0.5\textwidth]{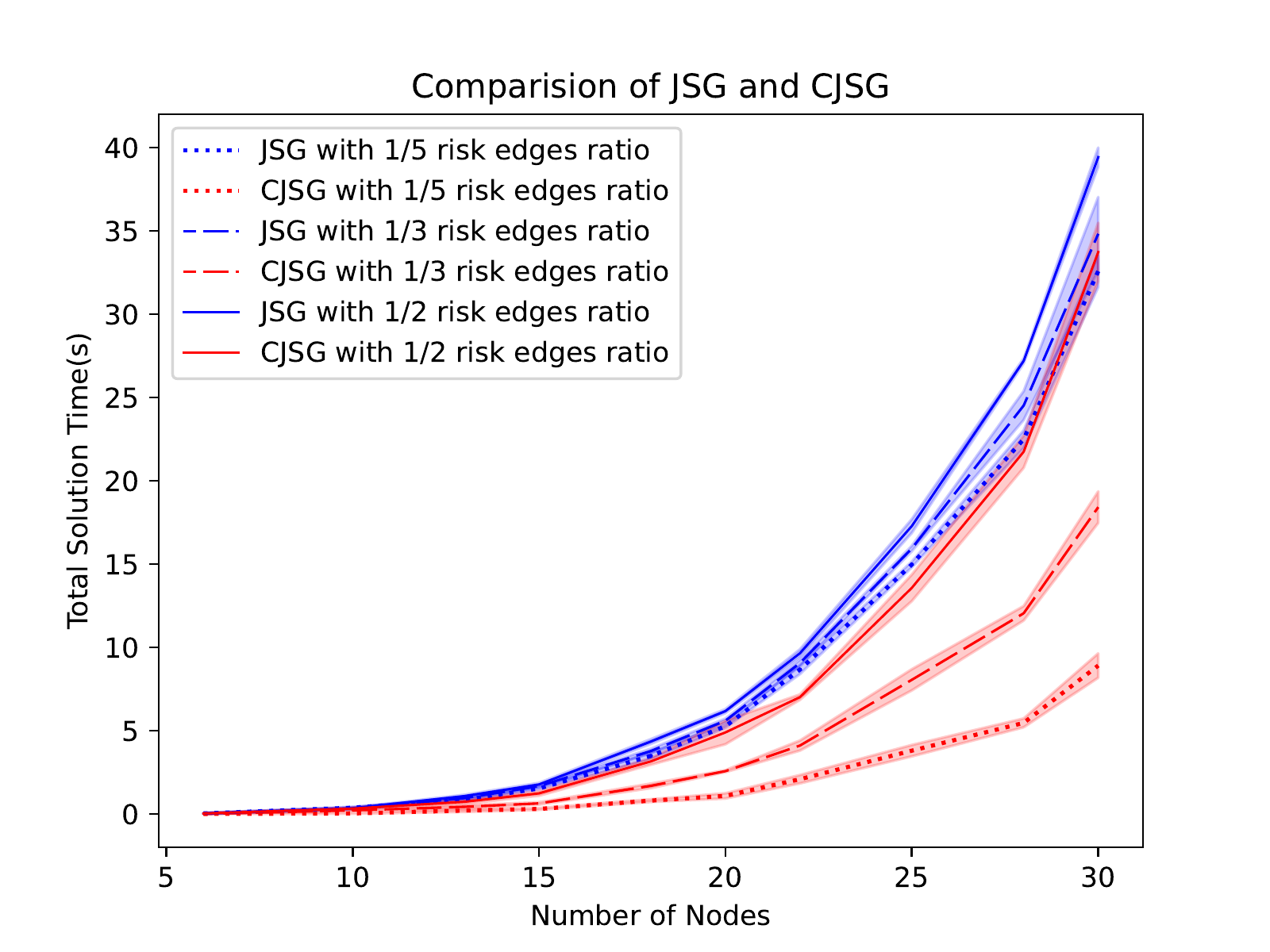}
    \caption{Comparison of time taken by JSG and CJSG to generate total solution with respect to increasing number of nodes and risk edges ratio.}
    \label{fig_wrt_riskedgratio}
\end{figure}

\begin{table*}[ht]
 \caption{Comparison of JSG and CJSG}\label{table:result_table}
\centering
\begin{tabularx}{\textwidth}{@{} l *{10}{C} c @{}}
\toprule
\multicolumn{1}{c}{} & \multicolumn{1}{c}{} & \multicolumn{2}{c}{\textbf{JSG}} & \multicolumn{2}{c}{\textbf{CJSG}} \\
\cmidrule(rl){3-4} \cmidrule(rl){5-6}
\textbf{Nodes} &  \textbf{Risk Edges Ratio} &  {Graph Construction(s)} & {Shortest Path(s)} & {Graph Construction(s)} & {Shortest Path(s)} \\

\midrule
10   & 1/5  & 0.2119$\pm$0.0410 & 0.1440$\pm$0.0176 & 0.0146$\pm$0.0021 & 0.0198$\pm$0.0152 \\
     &  1/3  &  0.1760$\pm$0.0519 & 0.1510$\pm$0.0040 & 0.0895$\pm$0.0151 & 0.1258$\pm$0.0207 \\
      & 1/2 & 0.1906$\pm$0.0273 & 0.1567$\pm$0.0091 & 0.1810$\pm$0.0143 & 0.1469$\pm$0.0478 \\

\midrule 
20 & 1/5 &  3.1662$\pm$0.0405 & 2.098$\pm$0.0445 & 0.6525$\pm$0.0931 & 0.4348$\pm$0.0651 \\
    & 1/3 & 3.3989$\pm$0.0603 & 2.1988$\pm$0.0660 & 1.7742$\pm$0.0497 & 0.7988$\pm$0.0094 \\
    & 1/2 & 3.8566$\pm$0.0906 & 2.3192$\pm$0.0265 & 3.6439$\pm$0.5942 & 1.2523$\pm$0.1181 \\
    
\midrule 
30 & 1/5 & 20.9363$\pm$0.8312 & 11.6431$\pm$0.12776 & 6.1126$\pm$0.5537 & 2.7973$\pm$0.1778 \\
    & 1/3 & 22.4891$\pm$1.7074 & 12.3330$\pm$0.4995 & 13.8171$\pm$0.7259 & 4.5996$\pm$0.2204 \\
    & 1/2 & 25.9774$\pm$0.4323 & 13.4440$\pm$0.14222 & 26.8156$\pm$1.49423 & 6.9094$\pm$0.2677 \\  
\bottomrule
\end{tabularx}
\end{table*} 

\subsection{Graph Construction Analysis} 
From Table~\ref{table:result_table}, we analyze the graph construction time for JSG and CJSG under different conditions. Given a fixed risk edges ratio, e.g., 1/3 of the total edges, the improvement in graph construction time by CJSG compared to JSG maintains as the number of nodes increases from 10 to 30. Similarly, if we fix the number of nodes, e.g., 10, and increase the risk edges ratio gradually from 1/5, then 1/3, and finally 1/2, CJSG still takes less time compared to JSG. We can also see such a pattern for node 20 and node 30. 
These results provide empirical evidence that CJSG is more efficient in constructing graphs. Note that when the risk edge ratio reaches 1/2, nearly all joint states are critical joint states, i.e., $|\mathcal{M}|\to|\mathcal{V}|^2$, and the graph construction times for the two approaches are close to each other. This observation is in line with Remark \ref{Rm_complexity}.


\subsection{Path Planning Analysis} 

From Table~\ref{table:result_table}, we also assess the path planning time for JSG and CJSG with varying nodes and risk edges ratio. Given a certain risk edges ratio, e.g., 1/3, we can see that CJSG takes less time than JSG. This is true even if we increase the nodes from 10 to 30. Similarly, if we fix the node size, e.g., 20, and gradually increase the risk edges ratio as 1/5, 1/3, and 1/2 of total edges, CJSG is still more efficient than JSG. We can see the same pattern for nodes 10, 20 and 30. These results indicate that CJSG is more efficient than JSG in terms of shortest path planning when the ratio of risk edges to nodes increases. 


Based on the experimental results shown in Table~\ref{table:result_table}, we compute the total time taken by both JSG and CJSG to find the final solution. The total time involves time taken for graph construction and shortest path planning. In Fig. \ref{fig_wrt_riskedgratio}, we show that as the number of nodes increases, the time to generate total solution for JSG increases more significantly than that of CJSG. Fig. \ref{fig_wrt_riskedgratio} also illustrates that as the risk edges ratio increases, the time to generate solution for JSG increases more significantly compared to CJSG. Thus, CJSG is more efficient than JSG for overall solution generation.

\section{CONCLUSIONS}\label{sec:conclusion}

 We presented a team coordination problem in a graph environment, where high levels of coordination in the form of “support” allows agents to reduce the cost of traversal on some edges. As an alternative to solving this with a version of MDP, we presented a method of planning in the joint state space – the Joint State Graph (JSG). We showed that a multi-agent path planning problem can be reduced to a single-agent planning in JSG, since the actions taken by the agents are built in to the edges of the JSG. We addressed the issue of scalability in the graph size by presenting a hierarchical decomposition method to perform path planning in two levels. We provided complexity and statistical analysis which show that the construction time for both CJSG and JSG do not differ by much, but the CJSG is significantly more efficient with regards to shortest path planning. Our numerical results verify this.

For future work, there are many aspects of the problem we proposed that we are intrigued to build upon. For instance, we would like to integrate more sophisticated notions of risk by using concepts from game theory and incorporating stochasticity in the formulation, such as stochastic costs. We are also interested in addressing the issue of scalability in terms of number of agents.

\bibliographystyle{ieeetr}
\bibliography{bib}
\end{document}